\documentclass{article}

\usepackage[nonatbib, preprint]{neurips_2026}

\usepackage[numbers]{natbib}
\usepackage[utf8]{inputenc} 
\usepackage[T1]{fontenc}    
\usepackage{hyperref}       
\usepackage{url}            
\usepackage{booktabs}       
\usepackage{amsfonts}       
\usepackage{nicefrac}       
\usepackage{microtype}      
\usepackage{xcolor}         

\usepackage{graphicx}
\usepackage{multirow}
\usepackage{svg}
\usepackage{subcaption}
\usepackage{fontawesome}

\makeatletter
\newcommand{\reviewhide}[1]{%
  \if@neuripsfinal
    #1
  \else
    \if@preprint
      #1
    \else
      [removed for review]
    \fi
  \fi
}
\makeatother

\usepackage{amsmath,amsfonts,amsthm,bm}

\newtheorem{theorem}{Theorem}

\def\eqref#1{equation~\ref{#1}}

\def\1{\bm{1}}

\def\ry{{\textnormal{y}}}

\def\rvx{{\mathbf{x}}}

\def\rvz{{\mathbf{z}}}

\def\vw{{\bm{w}}}
\def\vx{{\bm{x}}}
\def\vy{{\bm{y}}}
\def\vz{{\bm{z}}}

\DeclareMathAlphabet{\mathsfit}{\encodingdefault}{\sfdefault}{m}{sl}
\SetMathAlphabet{\mathsfit}{bold}{\encodingdefault}{\sfdefault}{bx}{n}

\def\gD{{\mathcal{D}}}

\def\gF{{\mathcal{F}}}

\def\gH{{\mathcal{H}}}

\def\gL{{\mathcal{L}}}

\def\gX{{\mathcal{X}}}
\def\gY{{\mathcal{Y}}}

\DeclareMathOperator*{\argmax}{arg\,max}
\DeclareMathOperator*{\argmin}{arg\,min}

\title{Non-Linear Strategic Classification Made Practical}

\author{%
  Jack Geary \\
  School of Informatics,\\
  University of Edinburgh,\\
  Edinburgh, United Kingdom\\
  \texttt{jack.geary@ed.ac.uk} \\
  \And
  Boyan Gao \\
  Department of Engineering Science,\\
  University of Oxford,\\ 
  Oxford, United Kingdom \\
  \AND
  Henry Gouk \\
  School of Informatics\\
  University of Edinburgh\\
  Edinburgh, United Kingdom\\
   \texttt{henry.gouk@ed.ac.uk} \\
}

\begin{document}

\maketitle

\newcommand{\keypoint}[1]{\textbf{#1}\qquad}
\theoremstyle{definition}
\newtheorem{definition}{Definition}[section] 

\newtheorem{proposition}{Proposition}[section] 

\begin{abstract}
Algorithmic developments in Strategic Classification have been mostly limited to linear classifiers in settings where the best response has a closed-form solution or can be easily approximated. While some work has explored the role of non-linear classifiers in strategic settings, progress in this direction is impeded by the computational intractability of the strategic behaviour. Addressing this, we present a novel method for approximating the best response by exploiting Lagrangian duality. By reformulating the strategic response as a constrained optimisation problem, we can construct a Lagrangian that is amenable to first order optimisation methods. This approach reproduces closed-form strategic behaviour in linear settings and can be straight-forwardly applied to non-linear settings. We show how the Implicit Function Theorem can be used in conjunction with our proposed response formulation during classifier learning to compute the total gradient of the loss. This connects the classifier parameters directly to the consequent strategic behaviour, yielding a novel training algorithm that can exploit this relationship. Experimental evaluation shows that the resulting models achieve improved strategic accuracy on common machine learning datasets.
\end{abstract}

\begin{center}
    \faGithub \ \reviewhide{\href{github.com/Justme21/strategic-classification-framework}{Experiment Repo}}
\end{center}

\section{Introduction}
Those with a preference for a certain classification, and knowledge of the classifier that is being used, have an incentive to misrepresent their state to receive a favourable outcome \cite{hardt2016}. This dynamic can arise in many sociotechnical situations, spanning from universities deciding which students to enrol as far as institutions determining how to distribute aid and resources in times of need. However the phenomenon is broader than classically considered sociotechnical situations: such strategic behaviour can arise in a broader range of applications than just those that deal with classifying individuals. For example, such dynamics are also a concern in applications of machine learning to protein synthesis \citep{wittmann2025strengthening} and designing benchmarks for generative AI \citep{mcintosh2025inadequacies}. While certain applications are well-served by simple linear models, in many cases deep neural networks have been shown to provide excellent predictive performance.

Strategic classification addresses settings where a \textit{Learner} attempts to classify a population of \textit{Agents} who strategically perturb their representations to optimise a utility known to both players \citep{hardt2016, bruckner2011}. The Learner’s goal is to develop a classifier that remains robust to these perturbations, thereby minimising the impact of manipulation. However, computing the optimal response is a significant challenge, as it is typically discontinuous and non-differentiable in most cases. Consequently, much of the existing literature focuses on linear classifiers, which can allow for closed-form expressions of the best response in some situations \citep{levanon2021, levanon2022, eilat2022}. While this focus on linearity is advantageous for the interpretability required in sociotechnical scenarios, it limits the scope of research and the potential for applying these developments to less socially sensitive settings that could benefit from more flexible models.

To address this limitation we observe that the Agents' objective can be reformulated as a constrained optimisation problem. This admits the utilisation of the Karush Kuhn Tucker (KKT) assumptions to derive the Lagrangian dual corresponding to the problem \citep{kuhn1957}. Since the optima of the Lagrangian dual correspond to the optima of the original optimisation problem, it can be used as a surrogate method for approximating the Agents' best response in general settings. We show that in simple environments our Lagrangian dual-based method for computing the best response better approximates the expected best response than existing gradient-based approaches. Moreover, leveraging this novel method for computing responses to yield an expression for the total derivative of the strategic classification objective. This enables us to explore the utility of directly optimising the standard learning objective, rather than resorting to approaches that optimise for a different solution concept.


In summary, we make the following contributions:
\begin{itemize}
    \item We propose a novel method for approximating the best response to a given classifier based on the Lagrangian Dual  of an alternate formulation of the Agents' objective. This method addresses key weaknesses in current response approximation methods, and can be directly applied in settings with linear or non-linear classifiers.
    \item We observe that the Lagrangian Dual response formulation provides a model for relating the learned parameters to the Agents' behaviour. We propose a novel training algorithm that uses the Implicit Function Theorem to exploit this relationship and demonstrate that the resulting models show improved strategic accuracy on various real world datasets.
    \item We identify a property of some response methods, \textit{reputability}, indicating a method reliably identifies Agent states that are vulnerable to gaming. We prove that the agreement between reputable response methods and the true best response is monotonically related to the number of points the gamed by the response method, allowing for the comparison of different response methods in non-linear settings, without requiring a source of ground-truth gaming behaviour.
\end{itemize}

\section{Related Work}
Building off of the setting established by \cite{bruckner2011, grosshans2013}, Strategic Classification was originally formulated by \citeauthor{hardt2016} \cite{hardt2016}. This models the classifier learning problem as an interaction between a Learner attempting to construct a classifier, and Agents that have perfect awareness of the classifier and can pay perturb their state before being classified. It is assumed that Agents receive positive utility from positive classification, and that the costs the Agents can pay to change their state is known to both the Agents and the Learner. The resulting optimisation problem is generally computationally intractable; in order to empirically evaluate their algorithm, \citeauthor{hardt2016} make the simplifying assumption that the learned classifier is a linear SVM model \cite{hardt2016}. 

Subsequent works in this area have relaxed some of the assumptions made in the original paper; \cite{cohen2024} weaken the assumption that the Agents know the Learner's classifier, \cite{levanon2022} assume that Agents can have a preference for positive or negative classification, \cite{rosenfeld2023b} examines the consequences when the Agents' cost function is unknown to the Learner, and \cite{sundaram2023, braverman2020, geary2025} explore the setting where there is randomness or uncertainty in the classifier learned by the Learner. However, all of these relaxations are considered exclusively in the setting where the classifier being learned is a linear model, and the results are primarily demonstrated on small-scale toy datasets.

\citeauthor{trachtenberg2025} \cite{trachtenberg2025} explicitly explore strategic behaviour for non-linear classifiers. However, their principal motivation is on demonstrating the possible effects of strategic behaviour, and are able to realise this experimentally without having to explicitly compute the best response. \citep{levanon2021} also consider applications of their approach to non-linear classifiers, in the form of recurrent neural network models applied to time series data. However, their solution approach still necessarily relies on model convexity in the inputs, which is not generally true in non-linear settings.

A central motivation for the widespread adoption of the linear classifier in Strategic Classification research is its theoretical simplicity, and that its best response has a closed form definition. This makes it possible to learn strategically robust linear classifiers without actually having to solve the bilevel optimisation problem inherent to the standard problem formulation. This is often used in Strategic Classification literature  (e.g., \citep{levanon2021, levanon2022, eilat2022}). While this facilitates practical experimentation, it limits results to linear models. Some work has been done to weaken this restriction; \citep{levanon2022} in particular, propose a loss function, the Strategic Hinge loss, which can be used when optimising linear models without explicitly computing the best response. However, in general there has been little progress for non-linear models.

Computing the total derivative for optimisation is a commonly used tool in bilevel optimisation \cite{zucchet2022}. The implicit function theorem is widely used to compute the total derivative in these settings \cite{zucchet2022,pedregosa2016hyperparameter, gould2016differentiating}. This approach has been applied in hyperparameter optimization \cite{franceschi2018bilevel, lorraine2020optimizing} meta-learning \cite{rajeswaran2019meta, gao2022loss, gao2022meta}, and differentiable programming \cite{domke2012generic}, providing scalable alternatives to costly unrolled optimization. In contrast with these settings, where it is assumed that the lower level objective is unconstrained, in this work the lower level problem is itself subject to constraints. In fields such as Performative Prediction, recent work has explored using the total derivative in model learning \cite{miller2021, cyffers2024, izzo2021, mofakhami2023, zhong2025}. However, this setting provides no access to the underlying response model; these approaches rely on zeroth order estimation methods to approximate the total derivative. 

\section{Strategic Classification and the Best Response}
Consider a class of models $\gH = \{ \vx \mapsto h_\theta(\vx) \, : \, \theta \in \Theta \}$, parameterised by elements of some space, $\Theta$. The classical classifier learning problem can be formulated as follows; for a feature space, $\gX$, and label space, $\gY = \{-1,1\}$, find a model, $f_\theta \in \gF = \{\vx \mapsto \operatorname{sgn}(h_\theta(\vx)) : h_\theta \in \gH\}$, that achieves high accuracy. If $\gD = \mathcal{P}(\gX \times \gY)$ is the distribution over the features and labels, and $S = \{(\rvx_{i}, \ry_{i})\}_{i=1}^{n}$ is a training set consisting of independent and identically distributed samples from $\mathcal{D}$, then this goal is often addressed by finding the parameters that minimise the empirical risk,
\begin{equation}
    \hat{\theta} = \argmin_{\theta \in \Theta} r(\theta), \qquad r(\theta) = \dfrac{1}{n} \sum_{i=1}^{n}l(\theta, \rvx_{i}, \ry_{i}),
\end{equation}
where $l$ is generally the zero--one loss of $f_\theta$. 

In many realistic classification scenarios $\mathcal{D}$ is not representative of the data that a trained classifier will actually be deployed on \citep{quinonero2022dataset,rosenfeld2023a}. In strategic settings, each data point $(\rvx, \ry)$ may be associated with some Agent that can perturb their representation, $\rvx$. In particular, given that the Agent knows the classifier model, $f_{\theta}$,  they may be motivated to respond to the classifier by manipulating their state in order to obtain a positive classification, irrespective of their true label, $\ry$. This is typically modelled using an idealised best response function \citep{hardt2016},
\begin{equation}
    \Delta^{*}(\vx, \theta) = \argmax_{\vz \in \gX} f_\theta(\vz) - c(\vx, \vz),
    \label{eqn:best_response}
\end{equation}
that trades off receiving a positive classification with some cost, $c(\vx, \vz)$. This cost is assumed to satisfy several natural properties: if there is no manipulation there is no cost, and the cost should be subadditive, $c(\vx,\vz) \leq c(\vx,\vy) + c(\vy,\vz)$ \cite{braverman2020}.

Strategic behaviour can, in practise, significantly undermine the performance of the resulting classifier \citep{levanon2021}. To address this, algorithms have been designed to support training of classifiers that are robust to such misrepresentations \citep{levanon2021, sundaram2023, perdomo2020}. The most common approach to addressing this in the literature the idea of adapting the empirical risk to include this idealised best response.

In Strategic Classification, the objective is to minimise the empirical loss evaluated at the best response, $\Delta(x, \theta)$, which has lead to the Strategic Empirical Risk Minimisation (SERM) approach \citep{sundaram2023},
\begin{align}
    \label{eqn:strategic_classification}
    \tilde{\theta} = \argmin_{\theta \in \Theta} &\frac{1}{n} \sum_{i=1}^n l(\theta, \Delta^{*}(\rvx_i, \theta), \ry_i) \\
    \textup{subject to} &\quad \Delta^{*}(\vx, \theta) = \argmax_{\vz \in \gX} f_\theta(\vz) - c(\vx, \vz). \nonumber
\end{align}
This is a bilevel optimisation problem, where the $\argmin$ over $\theta$ is referred to as the upper level and the $\argmax$ over $\vz$ as the lower level. Observe that computing $\tilde{\theta}$ relies heavily on the ability to solve the lower level problem, which can, for general $\mathcal{F}$ be intractable to solve. As a result, much of the work in the field is focussed on identifying settings and algorithms for which tractable solutions exist \cite{hardt2016, levanon2021, perdomo2020}.

\section{Evaluating Responses}
The solution to the problem posed in Equation \ref{eqn:strategic_classification} relies heavily on the best response, $\Delta^{*}$. While there are limited settings where has a tractable solution, in practise $\Delta^{*}$ generally can't be exactly computed and must be approximated. While previous work has explored methods to approximate $\Delta^{*}$ (e.g., \cite{perdomo2020}), the question of how to evaluate the quality of an approximate response remains open. This is particularly pertinent in the case of responses designed to respond to non-linear classifiers, where the true $\Delta^{*}$ can't be observed.
To address this we propose a measure, \textit{Agreement}, that can be used to quantify the rate of agreement between a response method, and the true best response.

\subsection{Agreement}
Let $\Delta: \mathcal{X}\times \Theta \rightarrow \mathcal{X}$ be an approximate solution to $\Delta^{*}$. A classifier $f_{\theta}, \theta \in \Theta$ is gamed by $\Delta$ at a point $\vx\in \mathcal{X}$ if $f_{\theta}(\Delta(\vx, \theta)) \neq f_{\theta}(x)$. Let $I^{\theta}_{S}(\Delta) = \{\vx \in S| f_{\theta}(\Delta(\vx, \theta)) \neq f_{\theta}(\vx)\}$ be the set of points in $S$ gamed by $\Delta$. For a dataset $S$ and classifier model $f_{\theta}$, the Agreement between two response methods $\Delta_{1}, \Delta_{2}$, $A_{S}^{\theta}(\Delta_{1},\Delta_{2}) \in [0,1]$, is given by the Jaccard Index between the two sets:
\begin{equation}
    A^{\theta}_{S}(\Delta_{1},\Delta_{2}) = \dfrac{|I^{\theta}_{S}(\Delta_{1}) \cap I^{\theta}_{S}(\Delta_{2})|}{|I^{\theta}_{S}(\Delta_{1}) \cup I^{\theta}_{S}(\Delta_{2})|}.
\end{equation}
An Agreement of $0$ implies that the methods did not agree at all, while an agreement of $1$ indicates perfect alignment in what points should be gamed.

$A_{S}^{\theta}(\Delta,\Delta^{*})$ can be seen as an estimate of how well $\Delta$ approximates $\Delta^{*}$ on $S$. In particular, for two response methods, $\Delta_{1}, \Delta_{2}$, if $A_{S}^{\theta}(\Delta_{1}, \Delta^{*})>A_{S}^{\theta}(\Delta_{2},\Delta^{*})$ then $\Delta_{1}$ is a better approximation of $\Delta^{*}$ than $\Delta_{2}$ on $S$.

\subsection{Agreement With an Unobservable Best Response}
In cases when $\Delta^{*}$ can't be computed, to be able to compare response methods, it is necessary to derive a proxy measure to Agreement that does not directly rely on evaluating $\Delta^{*}$. To derive such a measure we will rely on the following definition;
\begin{definition}[Reputability]
    A response method $\Delta$ is said to be \textit{Reputable} if, for any parameter setting $\theta \in \Theta$, $\Delta$ gaming a point $\vx \in \mathcal{X}$ implies that $\vx$ is a point that would be gamed by the best response, $\Delta^{*}$. Specifically:
    \begin{equation}
        f_{\theta}(\Delta(\vx, \theta)) \neq f_{\theta}(\vx) \implies f_{\theta}(\Delta^{*}(\vx, \theta)) \neq f_{\theta}(\vx).
    \end{equation}
\end{definition}
Observe that if $\Delta$ is reputable, then $|I_{S}^{\theta}(\Delta)|$ provides a lower bound on the number of points that are gameable under the unobservable $\Delta^{*}$. As a consequence of this it can be shown that the reputability property is sufficient for $|I_{S}^{\theta}(\Delta)|$ to serve as an adequate proxy for comparing the agreement of two response methods. 
\begin{theorem}
    For two reputable response methods, $\Delta_{1}, \Delta_{2}$, if $|I_{S}^{\theta}(\Delta_{1})|>|I_{S}^{\theta}(\Delta_{2})| \iff A_{S}^{\theta}(\Delta_{1},\Delta^{*})>A_{S}^{\theta}(\Delta_{2}, \Delta^{*})$
    \label{thm:reputable_ranking_property}
\end{theorem}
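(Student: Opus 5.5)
The plan is to use reputability to turn the Jaccard index into a simple ratio. Fix $\theta$ and $S$, and write $I^{*} = I_{S}^{\theta}(\Delta^{*})$ for the set of points gamed by the best response. The first step is to show that reputability of $\Delta_j$ means exactly $I_{S}^{\theta}(\Delta_j) \subseteq I^{*}$ for $j=1,2$. Indeed, if $\vx \in I_{S}^{\theta}(\Delta_j)$, then $f_\theta(\Delta_j(\vx,\theta)) \neq f_\theta(\vx)$, so by the definition of reputability $f_\theta(\Delta^{*}(\vx,\theta)) \neq f_\theta(\vx)$, and hence $\vx \in I^{*}$.

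With this inclusion the intersection and union in the Agreement simplify to $I_{S}^{\theta}(\Delta_j) \cap I^{*} = I_{S}^{\theta}(\Delta_j)$ and $I_{S}^{\theta}(\Delta_j) \cup I^{*} = I^{*}$. Therefore
\begin{equation*}
A_{S}^{\theta}(\Delta_j,\Delta^{*}) = \frac{|I_{S}^{\theta}(\Delta_j)|}{|I^{*}|}, \qquad j=1,2.
\end{equation*}
The key point is that the denominator $|I^{*}|$ is the same for both methods, since it depends only on $\theta$, $S$ and $\Delta^{*}$. So each Agreement is a strictly increasing linear function of $|I_{S}^{\theta}(\Delta_j)|$, with common positive slope $1/|I^{*}|$.

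The biconditional then follows directly. For the forward direction, $|I_{S}^{\theta}(\Delta_1)|>|I_{S}^{\theta}(\Delta_2)|$ gives the strict inequality of Agreements after dividing by $|I^{*}|>0$. For the reverse direction, multiply by $|I^{*}|$.

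The only real obstacle is the degenerate case $I^{*}=\emptyset$, where the Jaccard index is $0/0$. In that case reputability forces both $I_{S}^{\theta}(\Delta_j)$ to be empty, so neither strict inequality can hold and the biconditional is vacuously true. I would handle this either by adopting a convention that both Agreements are equal (say $1$), or by stating it as a separate trivial case. Similarly, if $|I_{S}^{\theta}(\Delta_1)|>|I_{S}^{\theta}(\Delta_2)|$ then $I_{S}^{\theta}(\Delta_1)$ is nonempty, so $|I^{*}|\ge 1$ and the division is legitimate. No other assumptions on the cost $c$ or on the form of $f_\theta$ are needed.
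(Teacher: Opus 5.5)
Your proof is correct, and it takes a cleaner route than the paper's in the forward direction.

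The paper begins by assuming, ``without loss of generality'', that $I_{S}^{\theta}(\Delta_{2}) \subseteq I_{S}^{\theta}(\Delta_{1})$. It then uses $|I_{S}^{\theta}(\Delta_{1})|>|I_{S}^{\theta}(\Delta_{2})|$ to exhibit a single witness $\vx' \in I_{S}^{\theta}(\Delta_{1})\setminus I_{S}^{\theta}(\Delta_{2})$. By reputability of $\Delta_1$, this witness lies in $I_{S}^{\theta}(\Delta_{1})\cap I_{S}^{\theta}(\Delta^{*})$ but not in $I_{S}^{\theta}(\Delta_{2})\cap I_{S}^{\theta}(\Delta^{*})$. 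It is the nesting assumption that turns this one witness into a strict inequality between the intersection sizes. The paper brings in the identity $I_{S}^{\theta}(\Delta_{j})\cup I_{S}^{\theta}(\Delta^{*}) = I_{S}^{\theta}(\Delta^{*})$ only in the converse direction, where its argument essentially coincides with yours.

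The nesting is not actually without loss of generality for two arbitrary reputable methods, and your argument shows it is unnecessary. Reputability gives $I_{S}^{\theta}(\Delta_{j}) \subseteq I_{S}^{\theta}(\Delta^{*})$, so the intersection collapses as well as the union. This yields $A_{S}^{\theta}(\Delta_{j},\Delta^{*}) = |I_{S}^{\theta}(\Delta_{j})|/|I_{S}^{\theta}(\Delta^{*})|$, and both directions follow immediately from the common denominator. Your route also gives a stronger, quantitative statement: Agreement is proportional to the number of gamed points. Finally, you handle the degenerate case $I_{S}^{\theta}(\Delta^{*})=\emptyset$, which the paper does not address.
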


The proof of Theorem \ref{thm:reputable_ranking_property} is provided in Appendix \ref{sec:proof_of_reputability_theorem}. A direct consequence of this theorem is that, for reputable responses, while it is not possible to determine what ``true" gaming behaviour would be in the setting, it is possible to rank order the responses based on how well they approximate the behaviour by the number of points in $S$ they succeed in gaming.

\section{Computing the Best Response}
\label{sec:computing_best_response}

For a linear model with a squared Euclidean cost,
\begin{equation}
    c(\vx,\vz) = \frac{1}{2\epsilon}||\vx-\vz||_{2}^{2}
    \label{eqn:cost}
\end{equation}
the best response has a closed-form solution and can be computed exactly. In other cases, existing methods simply approximate the best response by optimising differentiable relaxations of the Agent's objective. See Appendix \ref{app:responses} for a more details discussion. We improve upon this by reformulating the original objective in a way that makes it more conducive to gradient-based optimisation.

\subsection{Lagrangian Dual Response}
\label{sec:lagrangian_dual_response}
To address some of the weaknesses highlighted in the previous section, we consider an alternative formulation of the lower level objective. By explicitly addressing the motivation of the lower level objective as being to find a minimum cost response that successfully games the classifier, we can treat it as a constrained optimisation problem. We can write Equation \ref{eqn:best_response} equivalently as,
\begin{align}
        \label{eqn:best_response_bilevel_form}
        \Delta^{*}(\rvx, \theta) = \argmin_{\rvz \in \mathcal{X}} & \quad c(\rvx, \rvz),\\
        \operatorname{subject\,to} &\quad h(\rvz) \geq 0,\nonumber \\
        & \quad c(\rvx, \rvz) \leq 2. \nonumber
\end{align}
In practice, one must constrain $h(\vz) \geq \gamma$ for some small $\gamma>0$ to ensure the inequality holds strictly. Observe that this formulation captures the same constraints as in Equation \ref{eqn:best_response}, but without depending on the interaction between the utility and the cost during the optimisation. 

The constrained optimisation problem proposed in Equation \ref{eqn:best_response_bilevel_form} is amenable to Karush-Kuhn-Tucker (KKT) solution methods \citep{kuhn1957, gordon2012}. Under this formulation we can replace the objective in Equation \ref{eqn:best_response_bilevel_form} with the Lagrangian dual,
\begin{equation}
    \label{eqn:lagrangian}
    \gL(\theta, \vx, \vz, \mu_1, \mu_2) = c(\vx, \vz) + \mu_1(h_\theta(\vz) - \epsilon) + \mu_2(c(\vx, \vz) - 2),
\end{equation}
where $\mu_1, \mu_2 \in \mathbb{R}$ are Lagrange multipliers. This is used to define an optimisation problem,
\begin{align}
    \label{eqn:lagrangian_best_response}
    \Delta^{LD}(\rvx, \theta) = \argmin_{\rvz} \max_{\mu_1, \mu_2} & \quad \mathcal{L}(\theta, \rvx, \rvz, \mu_1, \mu_2)\\
    \operatorname{subject\,to} &\quad \mu_1, \mu_2 \geq 0, \nonumber
\end{align}
that can be solved with projected gradient ascent-descent. The projection function simply clamps the $\mu_1$ and $\mu_2$ Lagrange multipliers to ensure they remain non-negative. The Lagrangian dual solution converges to a locally optimal solution whilst enforcing the constraints. In the case where all local optima are global optima---e.g., if $h_\theta$ and $c$ are convex in $\vz$---this approach is guaranteed to compute the best response.


\subsection{Post-Response Checks}
\label{sec:post_response_checks}
Valid responses must satisfy two invariants, $f_\theta(\Delta(\rvx, \theta))>0$ and $c(\rvx, \rvz)<2$. Both approaches described above may violate these invariants in some instances. The Gradient response will sometimes incur more cost than is rational, given the reward gained from obtaining a positive classification. The Lagrangian Dual may violate one of these constraints if there is no feasible point. In practise, to avoid these issues, we manually check that both invariants are satisfied and return the original $\vx$ if they are not. For the remainder of this work, unless stated otherwise, it can be assumed that the response methods evaluated have these post-response checks applied to them.

\section{Training Strategically Robust Classifiers}
\label{sec:strategic_training}
The solution to the bilevel optimisation problem given in Equation \ref{eqn:strategic_classification} corresponds to a Stackelberg equilibrium. Previous work has shown that, in the setting where the lower objective is as in Equation \ref{eqn:gradient_best_response}, a Nash equilibrium can be approached by alternating between computing the best response to a putative classifier and then fitting a new classifier on the perturbed data \citep{perdomo2020}. It was shown that for sufficiently benevolent data distributions and strategic response objectives, this solution concept can provide approximations of the more desirable Stackelberg equilibria. However it is unclear how good this approximation is in practice. To investigate this, we develop a Strategic Training algorithm that directly optimises the objective in Equation \ref{eqn:strategic_classification} by gradient descent.

The main complication in optimising the objective in Equation \ref{eqn:strategic_classification} is the need to backpropagate through the $\argmax$ in the lower level of the problem. To overcome this, we show how the total derivative of the objective can be computed, facilitating the application of any of the gradient-based optimisation methods typically used for training deep networks \citep{gould2016differentiating, gao2022loss}. The total derivative for a single term of the summation in the strategic empirical risk can be decomposed as
\begin{equation}
\label{eqn:total_derivative}
    \frac{d l}{d\theta} = \underbrace{\frac{\partial l}{\partial \theta}}_{\text{Direct Derivative}} + \underbrace{\frac{\partial l}{\partial \Delta} \frac{\partial \Delta}{\partial \theta}}_{\text{Indirect Derivative}}.
\end{equation}
The direct derivative is computed via the backpropagation process for conventional training of neural networks, but with the strategically manipulated features used in place of the true features. The indirect derivative represents a correction that takes into account how the strategic response depends on $\theta$. We explore two options for computing this term: (i) leveraging implicit gradient methods coupled matrix free linear algebra routines; and (ii) ignoring it completely, thus recovering a version of the REGD approach of \citep{perdomo2020} suitable for the strategic learning setting.

\subsection{Computing the Indirect Derivative}
\label{subsec:computing_the_indirect_derivative}
Computing the indirect derivative is complicated by the presence of the $\argmax$ and the indicator function present in the definition of the best response in Equation \ref{eqn:best_response}. The Lagrangian Dual formulation of the best response given in Equation \ref{eqn:lagrangian_best_response} allows us to circumvent the indicator function, but actually further complicates the $\argmax$ issue by introducing the maximisation over the Lagrange multipliers. We address this problem using so-called implicit gradient techniques that make use of the implicit function theorem and optimality conditions of the lower level problem to yield an expression for the gradient of $\Delta$ with respect to $\theta$.

\begin{theorem}
\label{thm:indirect_derivative}
The indirect derivative is given by
    \begin{align*}
        \frac{\partial}{\partial \Delta}\frac{\partial \Delta}{\partial \theta} = - \begin{bmatrix} \frac{\partial l}{\partial \Delta} & 0 & 0 \end{bmatrix} \begin{bmatrix}
\nabla^2_{\vz\vz} \mathcal{L} & \nabla_{\vz} h_\theta & \nabla_{\vz} c \\
\mu_1 \nabla_{\vz}^\top h_\theta & h_\theta - \epsilon & 0 \\
\mu_2 \nabla_{\vz}^\top c & 0 & c - 2
\end{bmatrix}^{-1} \begin{bmatrix}
\frac{\partial^2 \mathcal{L}}{\partial \theta \partial \vz} \\
\mu_1 \frac{\partial h_\theta}{\partial \theta} \\
0
\end{bmatrix}.
    \end{align*}
\end{theorem}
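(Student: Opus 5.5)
We will derive the expression by applying the implicit function theorem to the KKT system characterising the solution $(\vz^*, \mu_1^*, \mu_2^*)$ of the saddle-point problem in Equation \ref{eqn:lagrangian_best_response}. Treat $\Delta(\vx,\theta)=\vz^*(\theta)$ together with $\mu_1^*(\theta),\mu_2^*(\theta)$ as implicitly defined functions of $\theta$, with $\vx$ held fixed. At a local saddle point satisfying the KKT conditions, these quantities are characterised as a root of
\begin{equation*}
G(\vz,\mu_1,\mu_2;\theta)=\begin{bmatrix}\nabla_{\vz}\mathcal{L}\\ \mu_1\,(h_\theta(\vz)-\epsilon)\\ \mu_2\,(c(\vx,\vz)-2)\end{bmatrix}=0.
\end{equation*}
The first block is stationarity: $\nabla_{\vz} c+\mu_1\nabla_{\vz}h_\theta+\mu_2\nabla_{\vz}c=0$. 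The other two blocks are complementary slackness. Primal/dual feasibility ($\mu_i\ge 0$, constraints satisfied) are inequalities that we assume hold strictly or with a fixed active set locally. This way only the equalities in $G$ need to be differentiated.

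The key steps, in order, are as follows.

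First, differentiate $G(\vz^*(\theta),\mu^*(\theta);\theta)=0$ totally in $\theta$. This gives $J\,\frac{d(\vz,\mu_1,\mu_2)}{d\theta}=-\frac{\partial G}{\partial\theta}$, where $J=\partial G/\partial(\vz,\mu_1,\mu_2)$.

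Second, compute $J$ block by block.
\begin{itemize}
\item Row one gives $\nabla^2_{\vz\vz}\mathcal{L}$, $\nabla_{\vz}h_\theta$ and $\nabla_{\vz}c$, since $\partial\nabla_{\vz}\mathcal{L}/\partial\mu_1=\nabla_{\vz}h_\theta$ and $\partial\nabla_{\vz}\mathcal{L}/\partial\mu_2=\nabla_{\vz}c$.
\item Row two, by the product rule, gives $\mu_1\nabla_{\vz}^\top h_\theta$, then $h_\theta-\epsilon$, then $0$.
\item Row three gives $\mu_2\nabla^\top_{\vz}c$, then $0$, then $c-2$.
\end{itemize}
This reproduces the matrix in the statement.

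Third, compute $\partial G/\partial\theta$.
\begin{itemize}
\item Row one gives $\partial^2\mathcal{L}/\partial\theta\partial\vz=\mu_1\,\partial\nabla_{\vz}h_\theta/\partial\theta$.
\item Row two gives $\mu_1\,\partial h_\theta/\partial\theta$.
\item Row three gives $0$, because the cost $c$ does not depend on $\theta$.
\end{itemize}

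Fourth, invoke the implicit function theorem. Provided $J$ is invertible at the solution, it gives
\begin{equation*}
\frac{d(\vz,\mu_1,\mu_2)}{d\theta}=-J^{-1}\frac{\partial G}{\partial\theta}.
\end{equation*}
The $\vz$-block of this is $\partial\Delta/\partial\theta$.

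Finally, left-multiply by $\partial l/\partial\Delta$. Since the loss depends on $\vz$ and not directly on the multipliers, this amounts to left-multiplying by the row vector $\begin{bmatrix}\frac{\partial l}{\partial\Delta}&0&0\end{bmatrix}$, which extracts the first block row. This yields exactly the claimed product, i.e.\ the indirect term of Equation \ref{eqn:total_derivative}.

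The main obstacle is justifying the hypotheses of the implicit function theorem. We need $G$ to be continuously differentiable, which requires $h_\theta$ to be $C^2$ jointly in $(\vz,\theta)$. We also need $J$ to be nonsingular at the solution. The latter requires strict complementarity: for each constraint either $\mu_i>0$ or the constraint is inactive, so that, e.g., $\mu_1=0$ and $h_\theta-\epsilon=0$ never occur simultaneously, since that would zero out a row. It also requires second-order sufficiency together with linear independence of the active constraint gradients.

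I would state these as standing regularity assumptions, standard in the implicit-differentiation literature \citep{gould2016differentiating}. I would then note that, under them, the active set is locally constant in $\theta$, so the KKT system is locally a smooth equality system. Finally, I would remark that the post-response checks of Section \ref{sec:post_response_checks} (returning $\vx$) yield a locally constant response with zero indirect derivative, which lies outside the scope of the formula.
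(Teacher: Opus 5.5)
Your proposal is correct and follows the paper's route: the implicit function theorem applied to the KKT residual $F(\theta,\vx,\vw)$ with $\vw=(\vz,\mu_1,\mu_2)$, i.e.\ stationarity plus the two complementary-slackness products, followed by block-wise computation of $\partial F/\partial\vw$ and $\partial F/\partial\theta$ and extraction of the $\vz$-block via $[\partial l/\partial\Delta,\ 0,\ 0]$. The regularity hypotheses you make explicit are left unstated in the paper's proof, so spelling them out strengthens the argument rather than changing it. These are $C^2$ smoothness, strict complementarity, and LICQ with second-order sufficiency, which together guarantee a nonsingular Jacobian and a locally constant active set.
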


In the interests of space, the proof of this Theorem is deferred to Appendix \ref{app:proof_of_indirect_derivatives}. This approach to computing the total derivative of the strategic classification objective provides an ideal vehicle for experimentally determining the importance of indirect derivative.


\section{Experimental Results}
We experimentally investigate the utility of our proposed Lagrangian Dual response from two perspectives. First, we evaluate how effectively it can be used to respond to classifiers that have not necessarily been trained to be robust to this specific type of strategic response. Second, we investigate the efficacy of the two strategic training techniques discussed in Section \ref{sec:strategic_training}: directly optimising the total derivative (TGD), or using a version of REGD specialised for strategic settings. Details on model architectures and training procedures are given in Appendix \ref{app:experiment_details}. 

Experiments in this section are performed on a suite of datasets compiled primarily from a subset of the datasets in the TALENT collection (\citep{ye2024}). The set also contains the Housing dataset (\cite{houses2014}), as used in \cite{cyffers2024}, and the Give Me Some Credit dataset (\cite{GiveMeSomeCredit}) as used in \cite{perdomo2020, mofakhami2023}. Further details about the specific datasets used in our experiments can be found in Appendix \ref{sec:dataset_details}.


\subsection{Responding to Linear Models}
\label{sec:responding_to_linear_models}
\begin{figure}[t]
    \centering
    \includegraphics[width=\linewidth]{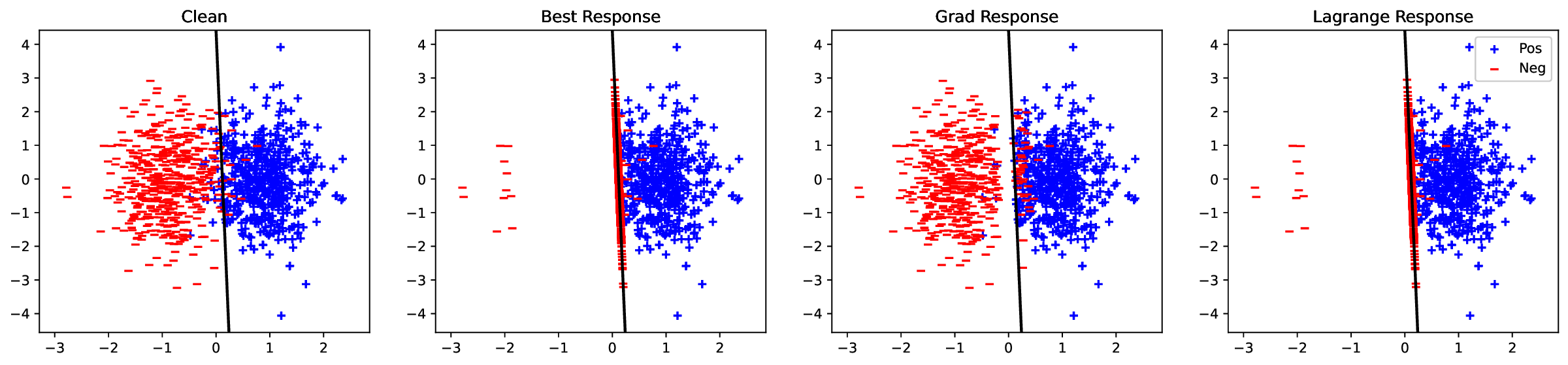}
    \caption{Various responses to a linear classifier trained on the Gaussians dataset (negative points marked with red '-', positive points with blue '+'): (Left) No Response; (Middle-Left) Best Response; (Middle-Right) Gradient Response; (Right) Lagrangian Response}
    \label{fig:compare_responses_balls}
\end{figure}

To demonstrate the effect this response has on the data distribution, we consider the two dimensional Gaussian dataset proposed in \cite{levanon2022}; a simple two dimensional dataset containing points sampled i.i.d from $S=\{(\vx_{i},1)\sim \mathcal{N}(\mu_{1}, 1)\}_{i=1}^{n} \cup \{(\vx_{i},-1)\sim \mathcal{N}(\mu_{-1},1)\}_{i=1}^{n}, \mu_{-1}<\mu_{1}$. We train a linear support vector machine (SVM) on the dataset, without taking any effort to make the classifier strategically robust. We consider three methods for computing strategic responses under the Euclidean cost: the exact closed-form solution, the Gradient response (Equation \ref{eqn:gradient_best_response}), and the Lagrangian Dual response (Equation \ref{eqn:lagrangian_best_response}). For all three approaches we use the squared Euclidean cost. 

Figure \ref{fig:compare_responses_balls} provides a visual comparison of these response methods on the Gaussians dataset. The Left pane shows the unperturbed dataset along with the location of the SVM decision boundary. Comparing with the optimal response (Middle-Left), we observe that the Gradient response (Middle-Right) is susceptible to two types of errors: (i) some points move too far over the decision boundary, meaning they incur more cost than is necessary; and (ii) some points that could game the model are not manipulated. In contrast to this, our Lagrangian Dual respons (Right) is able to exactly replicate the behaviour of the true best response.
\subsection{Responding to Non-Linear Models}
\label{subsec:responding_to_non_linear_models}
\begin{figure}[t]
    \centering
    \includegraphics[width=\linewidth]{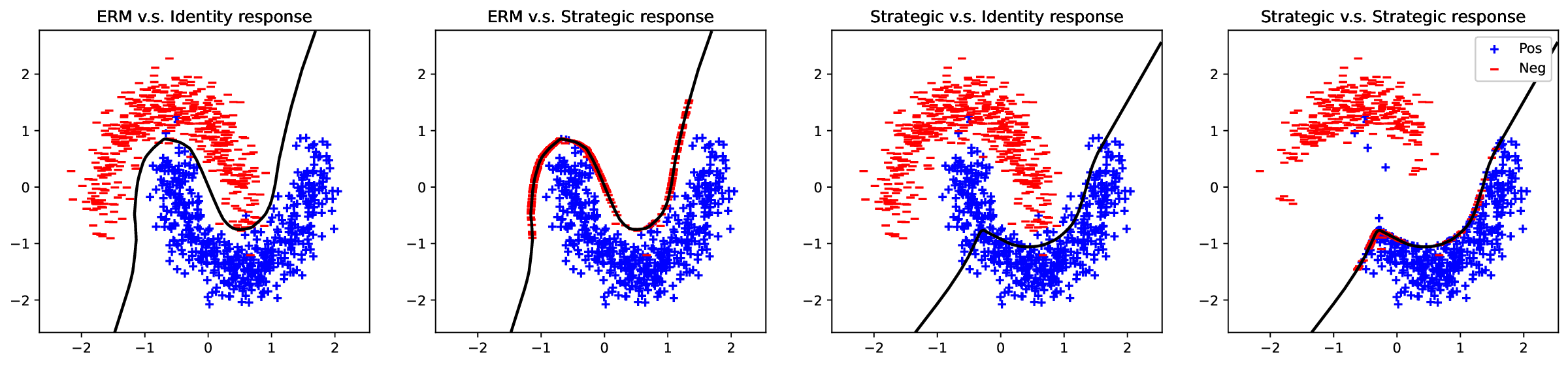}
    \caption{Various responses to an MLP classifier trained on the twin moons dataset (negative points marked with red '-', positive points with blue '+'): (Left) ERM with no response; (Middle-Left) ERM with strategic response; (Middle-Right) REGD-LD with no response; (Right) REGD-LD with strategic response}
    \label{fig:compare_non_linear_responses_twin_moons}
\end{figure}
To qualitatively investigate whether our Lagrangian Dual response is able to respond to non-linear models, we consider two MLP classifiers; one trained with ERM, and another trained using the strategic REGD method \cite{perdomo2020} trained on the twin moons dataset (Figure \ref{fig:compare_strategic_training}).

Figure \ref{fig:compare_non_linear_responses_twin_moons} (Left) visualises the ERM decision boundary and the unperturbed dataset. While this is optimal in absence of strategic behaviour, all of the negative points are close to the decision boundary, and our strategic response is therefore able to move them to lie on the positive side of the decision boundary (Middle-Left). When the model is trained with strategic REGD (Middle-Right), accuracy on the unperturbed data points is poor, but when the strategic behaviour is applied in the Right pane, we see that all the positive points are classified correctly but only a subset of the negatively labelled points successfully game the classifier.
\begin{table*}[t]
\centering
\caption{Proportion (\%) of points gamed for linear and MLP models trained on clean datasets under Gradient ($\Delta^{\text{GD}}$) and Lagrangian ($\Delta^{\text{LD}}$)  response models. Error bars represent 95\% binomial confidence interval. Larger value in each comparison is emphasised.}
\begin{tabular}{lcccc}
\toprule
 & \multicolumn{2}{c}{\textbf{Linear}} & \multicolumn{2}{c}{\textbf{MLP}} \\
\cmidrule(lr){2-3} \cmidrule(lr){4-5}
\textbf{Dataset} & $\Delta^{\text{GD}}$ & $\Delta^{\text{LD}}$ & $\Delta^{\text{GD}}$ & $\Delta^{\text{LD}}$ \\
\midrule
Bank Customer Churn & $3.05 \pm 0.38$ & $\mathbf{24.45 \pm 0.96}$ & $0.40 \pm 0.14$ & $\mathbf{12.80 \pm 0.75}$ \\
default of credit card & $1.27 \pm 0.14$ & $\mathbf{6.02 \pm 0.31}$ & $1.67 \pm 0.17$ & $\mathbf{4.70 \pm 0.27}$ \\
Employee & $0.00 \pm 0.00$ & $\mathbf{0.32 \pm 0.19}$ & $\mathbf{17.29 \pm 1.24}$ & $15.57 \pm 1.19$ \\
GMSC & $3.83 \pm 0.33$ & $\mathbf{24.92 \pm 0.75}$ & $4.97 \pm 0.38$ & $\mathbf{20.85 \pm 0.70}$ \\
Houses & $1.45 \pm 0.19$ & $\mathbf{10.76 \pm 0.48}$ & $1.31 \pm 0.18$ & $\mathbf{9.54 \pm 0.46}$ \\
HR Analytics & $3.76 \pm 0.31$ & $\mathbf{8.77 \pm 0.46}$ & $4.38 \pm 0.33$ & $\mathbf{13.26 \pm 0.55}$ \\
mobile c36 & $34.98 \pm 0.47$ & $\mathbf{37.64 \pm 0.48}$ & $0.00 \pm 0.00$ & $\mathbf{28.75 \pm 0.44}$ \\
statlog & $1.50 \pm 0.86$ & $\mathbf{32.00 \pm 3.30}$ & $1.50 \pm 0.86$ & $\mathbf{25.50 \pm 3.08}$ \\
Water Quality & $0.00 \pm 0.00$ & $\mathbf{16.76 \pm 1.59}$ & $0.73 \pm 0.36$ & $\mathbf{12.75 \pm 1.42}$ \\
\bottomrule
\end{tabular}
\label{table:combined_prop_gamed}
\end{table*}
\subsection{Strategic Responses on Real Data}
Theorem \ref{thm:reputable_ranking_property} presents a means of comparing the relative performance of different response methods in both linear and non-linear settings. However,to make use of this result we assert the following proposition (proof in Appendix \ref{sec:proof_of_prop_post_check_means_reputable});
\begin{proposition}
    Any response, $\Delta$, whose output satisfies to the post-response checks specified in Section \ref{sec:post_response_checks} is a reputable response.
    \label{prop:post_check_means_reputable}
\end{proposition}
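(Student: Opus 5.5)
The plan is to show directly that if a checked response $\Delta$ games $f_\theta$ at $\vx$, then the best response $\Delta^{*}$ must also move $\vx$ across the decision boundary. The first step is to unpack what the post-response checks of Section \ref{sec:post_response_checks} guarantee. Either $\Delta$ returns $\vx$ itself, or it returns some $\vz = \Delta(\vx,\theta)$ with $f_\theta(\vz) > 0$ (so $f_\theta(\vz) = 1$) and $c(\vx,\vz) < 2$. In the first case $f_\theta(\Delta(\vx,\theta)) = f_\theta(\vx)$, so the point is not gamed and the implication in the reputability definition holds vacuously. Hence we may assume $\Delta(\vx,\theta) = \vz \neq \vx$ satisfies both invariants and $f_\theta(\vz) \neq f_\theta(\vx)$. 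This forces $f_\theta(\vx) = -1$, since $f_\theta(\vz)=1$.

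The second step compares utilities in Equation \ref{eqn:best_response}. Using $c(\vx,\vx)=0$, staying put gives utility $f_\theta(\vx) - c(\vx,\vx) = -1$. Moving to $\vz$ gives utility $1 - c(\vx,\vz) > -1$. So $\vx$ is not a maximiser, and any maximiser $\vz^{*} = \Delta^{*}(\vx,\theta)$ satisfies
\[
f_\theta(\vz^{*}) - c(\vx,\vz^{*}) \geq 1 - c(\vx,\vz) > -1 .
\]

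The third step shows that this forces $f_\theta(\vz^{*}) = 1$. Suppose instead $f_\theta(\vz^{*}) = -1$. Then its utility is $-1 - c(\vx,\vz^{*}) \leq -1$, because costs are nonnegative; nonnegativity is implicit in the cost being a penalty, and I would state it explicitly as an assumption. This contradicts the strict inequality above. Therefore $f_\theta(\Delta^{*}(\vx,\theta)) = 1 \neq f_\theta(\vx)$, which is exactly the reputability condition.

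The main obstacle is the interpretive gap between the strict check $f_\theta(\vz) > 0$ and $h_\theta(\vz) \geq \gamma$, together with possible ties or non-attainment in the $\argmax$. To handle the $\argmax$ issue, I would phrase the argument for any element of the $\argmax$. If the $\argmax$ is not attained, I would phrase it for the supremum, noting that every near-optimal $\vz^{*}$ already has utility strictly greater than $-1$. The bound $c < 2$ is what makes the argument work: it guarantees the manipulation is strictly profitable, so there are no ties with staying at $\vx$.
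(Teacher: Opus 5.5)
Your proposal is correct and follows the same route as the paper's proof. A point that $\Delta$ games must have passed both checks, so manipulating is strictly profitable compared with staying at $\vx$, and hence $\Delta^{*}$ must also flip the label; you make explicit the utility comparison that the paper compresses into ``$\vx$ is a gameable point in response to $\theta$,'' including the use of $c(\vx,\vx)=0$ and of nonnegativity of the cost, which the paper relies on implicitly and which you rightly flag as an assumption.
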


Per Theorem \ref{thm:reputable_ranking_property}, reputable responses that game more points are better approximations of $\Delta^{*}$. We compare the performance of $\Delta^{\text{LD}}$ with that of $\Delta^{\text{GD}}$ by measuring the proportions of points they game when responding to linear and MLP models trained with ERM on each of our experiment datasets (Table \ref{table:combined_prop_gamed}). The results show that, in the linear case, $\Delta^{\text{LD}}$ consistently games more points than $\Delta^{\text{GD}}$. Similarly, in the MLP case $\Delta^{\text{LD}}$ consistently outperforms  $\Delta^{\text{GD}}$ in all but one of the datasets. These results indicate that our Lagrangian response, $\Delta^{\text{LD}}$, is reliably a better approximation of $\Delta^{*}$ than the Gradient response.

\subsection{Strategic Training}
\label{sec:regd_vs_tgd}
It remains to be shown that $\Delta^{\text{LD}}$ can improve performance on strategic training. $\Delta^{\text{LD}}$ can be incorporated into training by plugging it directly the strategic REGD training process proposed in \citep{perdomo2020}. An alternative is to train a model with a gradient descent-based algorithm, optimising the total derivative instead of the direct derivative (Section \ref{sec:strategic_training}). In the following results we investigate the consequences of both these approaches.  

\subsubsection{Visualising Strategic Training}
\begin{figure}[t]
    \centering
    \includegraphics[width=\linewidth]{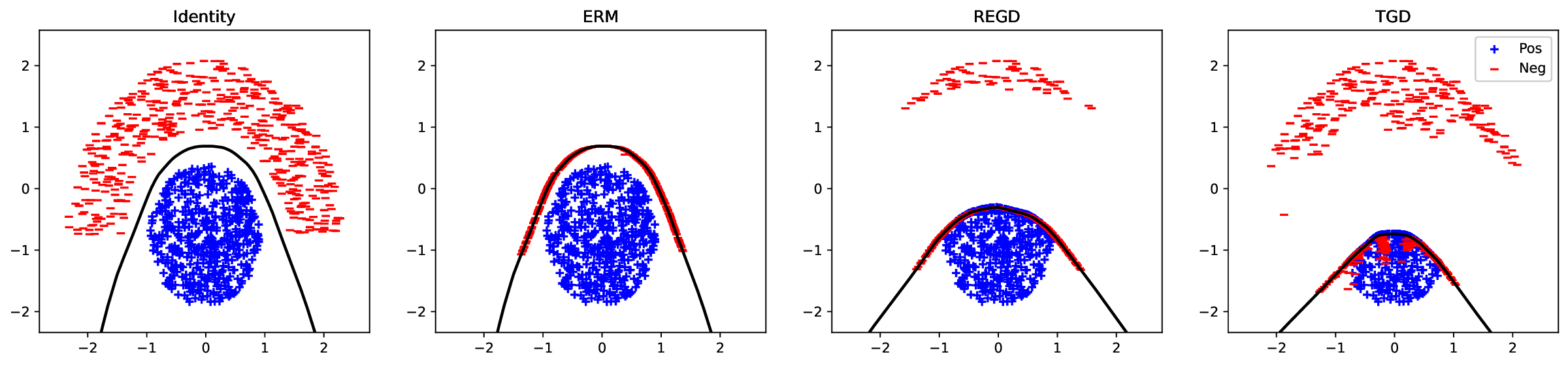}
    \caption{Results of Strategic Training on an MLP on the ball-and-disk dataset (negative points marked with red '-', positive points with blue '+'): (Left) ERM model; (Middle-Left) ERM model with Strategic response; (Middle-Right) REGD model (\cite{perdomo2020}) with Strategic Response; (Right) TGD model (Ours) with Strategic response.}
    \label{fig:compare_strategic_training}
\end{figure}
The impact of using the total gradient (TGD) instead of the direct gradient (REGD) can be observed in Figure \ref{fig:compare_strategic_training}. This depicts the results of training an MLP model on a two dimensional dataset comprised of a positive ball surrounded on one side by an equal number of negative points, with the two clusters separated by a narrow margin. The Left pane depicts ERM training without any consideration of the strategic behaviour, which is extremely vulnerable to strategic behaviour (Middle-Left). The Middle-Right pane presents the results from strategic REGD training. The learned decision boundary does demonstrate improved robustness compared to the ERM result, but is still vulnerable to gaming by the majority of the negative points. In contrast, the model resulting from TGD training (Right) demonstrates considerably greater robustness to gaming behaviour. In particular, we observe that the learner exploits awareness of the gaming potential of positive and negative points, resulting in a decision-boundary with greater distance from the negative points than realised by either other training method.
\subsubsection{Quantitative Evaluation}
\begin{table*}[t]
\centering
\caption{Strategic Accuracies (\%) realised for MLP model trained with different strategic training methods. Strategic accuracies computed with respect to ensemble response evaluated on the respective test datasets. Error bars represent 95\% binomial confidence interval. Best accuracy in each comparison is emphasised.}
\begin{tabular}{lcccccc}
\toprule
\textbf{Dataset} & ERM & REGD & REGD-LD & TGD \\ 
\midrule
Bank Customer Churn & $80.75 \pm 0.88$ & $67.55 \pm 1.05$ & $81.30 \pm 0.87$ & $\mathbf{85.50 \pm 0.79}$ \\
default of credit card & $81.55 \pm 0.50$ & $81.32 \pm 0.50$ & $82.23 \pm 0.49$ & $\mathbf{82.53 \pm 0.49}$ \\
Employee & $74.87 \pm 1.42$ & $76.58 \pm 1.39$ & $76.80 \pm 1.38$ & $\mathbf{76.91 \pm 1.38}$ \\
GMSC & $61.20 \pm 0.84$ & $56.42 \pm 0.86$ & $60.81 \pm 0.84$ & $\mathbf{64.19 \pm 0.83}$ \\
Houses & $93.31 \pm 0.39$ & $93.19 \pm 0.39$ & $95.13 \pm 0.34$ & $\mathbf{96.00 \pm 0.31}$ \\
HR Analytics & $78.84 \pm 0.66$ & $\mathbf{80.25 \pm 0.64}$ & $78.71 \pm 0.66$ & $78.21 \pm 0.67$ \\
mobile c36 & $62.18 \pm 0.48$ & $63.60 \pm 0.47$ & $\mathbf{85.96 \pm 0.34}$ & $64.46 \pm 0.47$ \\
statlog & $\mathbf{71.00 \pm 3.21}$ & $68.00 \pm 3.30$ & $69.50 \pm 3.26$ & $\mathbf{71.00 \pm 3.21}$ \\
Water Quality & $60.11 \pm 2.09$ & $57.56 \pm 2.11$ & $58.29 \pm 2.10$ & $\mathbf{62.30 \pm 2.07}$ \\
\hline
\end{tabular}
\label{table:big_table}
\end{table*}


Strategic accuracy is used to evaluate the effectiveness of the proposed training methods; an MLP with 2 hidden layers and 128 hidden dimensions per hidden layer is trained with each method. The accuracy of the resulting models is evaluated at test time on strategically perturbed data. In order to produce perturbed data that best approximates $\Delta^{*}$, we ensemble over the results of applying $\Delta^{\text{GD}}$ and $\Delta^{\text{LD}}$ to the data with the post-response checks (Section \ref{sec:post_response_checks}). As any point that would be gamed by either response is gamed by the ensemble, this ensemble is still reputable, and is at least as good an approximation to $\Delta^{*}$ as either method individually.  

Table \ref{table:big_table} compares four training processes: ERM with no awareness of strategic behaviour; strategic REGD with the gradient response; strategic REGD with the Lagrangian dual response (REGD-LD); and TGD computed with the Lagrangian Dual response. TGD consistently achieves higher strategic accuracy as compared with the other training methods, demonstrating that incorporating the total gradient into the model training procedure has a significant impact on the resulting model's robustness to strategic behaviour. Comparing the results from REGD with the gradient response and REGD with the Lagrangian dual response, we observe that, while REGD-LD reasonably consistently outperformed the REGD approach, the inclusion of the more accurate response in training does not account for the entirety of the improvement observed in TGD.

\section{Limitations}
While our theoretical results proposed make minimal assumptions that would limit the applicability of the results to more general problems in strategic classifications, our empirical results are confined linear models, and to a single class of non-linear model (MLP). As such they do not speak to the applicability of our proposed approach to more general types of nonlinearities which are not considered in this work. 

\section{Discussion \& Conclusion}
Strategic behaviour can undermine the accuracy and fairness of classifiers. Our intention in this work is to highlight the benefits that can be realised by producing models that are robust to strategic behaviour. However, as is apparent from our results (Figure \ref{fig:compare_non_linear_responses_twin_moons} and Table \ref{fig:compare_strategic_training}), producing models that are robust to strategic behaviour can have the inadvertent consequence of negatively misclassifying some people. These people, who otherwise may not have been inclined to behave strategically, thereby have no option but to attempt to manipulate the classifier, or else risk misclassification. This can put undue financial (or otherwise) burden on people. While some work has been done to explore this consequence in strategic settings (e.g., \citep{milli2019}), an adequate resolution has yet to be found.

In this work we have demonstrated how the Agents' objective in strategic classification can be reformulated as a constrained optimisation problem which is amenable to Lagrangian dual optimisation. A of this approach was the ability to derive the total gradient of the Learner's objective through use of the Implicit Function Theorem. We proposed a novel strategic training algorithm, TGD, which incorporated the total gradient into the training loop, and experimentally demonstrated that models trained with this algorithm consistently outperformed other approaches on strategic accuracy. We further proposed Agreement as a novel metric for evaluating approximations of the best response. We proved that, under minor conditions, the number of points gamed by different approximations could be used as an appropriate proxy to Agreement, preserving rank ordering per this metric.

\bibliographystyle{abbrvnat}
\bibliography{bibliography}

\appendix

\section{Proof of Theorem \ref{thm:reputable_ranking_property}}
\label{sec:proof_of_reputability_theorem}
\begin{proof}
    Without loss of generality, assume every point gamed by $\Delta_{2}$ is also gamed by $\Delta_{1}$: 
    \begin{equation}
        f_{\theta}(\Delta_{2}(\vx, \theta)) \neq f_{\theta}(\vx) \implies f_{\theta}(\Delta_{1}(\vx, \theta)) \neq f_{\theta}(\vx).
    \end{equation}
    Therefore,
    \begin{align}
        &I_{S}^{\theta}(\Delta_{2}) \subseteq I_{S}^{\theta}(\Delta_{1}) \\
        \implies &I_{S}^{\theta}(\Delta_{2})\cap I_{S}^{\theta}(\Delta^{*}) \subseteq I_{S}^{\theta}(\Delta_{1})\cap I_{S}^{\theta}(\Delta^{*}).
    \end{align} 
    $|I_{S}^{\theta}(\Delta_{1})|>|I_{S}^{\theta}(\Delta_{2})| \implies  \exists \vx' \in S \text{ s.t } f_{\theta}(\Delta_{1}(\vx', \theta)) \neq f_{\theta}(\vx')$ and $f_{\theta}(\Delta_{2}(\vx', \theta)) = f_{\theta}(\vx')$. Since $\Delta_{1}$ is reputable, $f_{\theta}(\Delta_{1}(\vx', \theta)) \neq f_{\theta}(\vx') \implies f_{\theta}(\Delta^{*}(\vx', \theta)) \neq f_{\theta}(\vx')$. Therefore $\vx' \in I_{S}^{\theta}(\Delta_{1})\cap I_{S}^{\theta}(\Delta^{*})$ and $\vx' \notin I_{S}^{\theta}(\Delta_{2})\cap I_{S}^{\theta}(\Delta^{*})$. By definition, it follows that $A_{S}^{\theta}(\Delta_{1}, \Delta^{*})>A_{S}^{\theta}(\Delta_{2},\Delta^{*})$.

    To prove equivalence we consider the fact that, for a reputable response $\Delta, I^{\theta}_{S}(\Delta_{1}) \cup I^{\theta}_{S}(\Delta^{*}) = I^{\theta}_{S}(\Delta^{*})$. Therefore
    \begin{align}
        &A_{S}^{\theta}(\Delta_{1}, \Delta^{*})>A_{S}^{\theta}(\Delta_{2},\Delta^{*}) \\
        \implies &|I^{\theta}_{S}(\Delta_{1}) \cap I^{\theta}_{S}(\Delta^{*})|>|I^{\theta}_{S}(\Delta_{2}) \cap I^{\theta}_{S}(\Delta^{*})| \\
        \implies &|I_{S}^{\theta}(\Delta_{1})|>|I_{S}^{\theta}(\Delta_{2})|
    \end{align}
\end{proof}

\section{Best Response Approximations}
\label{app:responses}
\subsection{Linear Response}
\label{app:linear_response}
For a linear model with Euclidean cost, the best response has a closed-form solution and can be computed exactly. As a result, Equation \ref{eqn:strategic_classification} collapses to a single level problem;
\begin{equation}
    \label{eqn:linear-exact}
    \Delta(\vx, \theta) = \begin{cases}
    \vx - \theta \frac{h_\theta(\vx)}{\|\theta\|_2} & \text{if $-2 \leq h_\theta(\vx) < 0$}\\
    \vx & \text{otherwise}.
    \end{cases}
\end{equation}
where, for simplicity we have let $\epsilon=1$. 

\subsection{Gradient Response}
\label{app:gradient_response}
Gradient-based methods are a common approach for solving optimisation problems such as Equation \ref{eqn:strategic_classification}. Such methods involve using gradient ascent to solve the lower level maximisation, and gradient descent to solve the upper level minimisation. \citet{perdomo2020}, propose an iterative gradient-based method that they show can be used to approximate the best response in Strategic Classification problems. However, we note that, since $f$ is a binary classifier, it is not differentiable, and so gradient methods cannot be directly applied to the problem as presented in Equation \ref{eqn:strategic_classification}.

Instead, the response method proposed in \cite{perdomo2020} computes the approximation to a relaxation of the problem in Equation \ref{eqn:strategic_classification}, where the binary classifier, $f_{\theta}$, is replaced in the lower objective with the corresponding $h_{\theta} \in \gH$ (such that $f_{\theta}(x) = sgn(h_{\theta}(x))$), where $\gH$ is a class of differentiable models. This effectively reformulates the best response objective given in Equation \ref{eqn:best_response} as
\begin{equation}
\label{eqn:gradient_best_response}
    \Delta^{GD}(\vx, \theta) = \argmax_{\vz \in \gX} h_\theta(\vz) - c(\vx, \vz).
\end{equation}
As a consequence of this relaxation, one of the constraints on the best response is relaxed; in the best response of Equation \ref{eqn:best_response}, solutions trade off maximising $f_\theta(\vz)$, minimising the cost $c(\vx, \vz)$, and constraining the cost to be less than two ($c(\vx, \vz)\leq2$). This last constraint arises from the interaction between the utility and the cost terms in the optimisation; $f$ is binary and the maximum the utility can be improved by is two. If the cost to realise this exceeds two then the Agent would realise a higher objective by letting $\vz = \vx$. This constraint does not necessarily apply in the case where $f_\theta$ is replaced with $h_\theta$. In particular, we observe that in the linear case with the squared Euclidean cost, the gradient response converges to
\begin{equation}
    \vz = \vx + \epsilon\theta
\end{equation}
which, depending on the value for $\theta$ may wildly over- or under-estimate true gaming behaviour. This is evident from Figure \ref{fig:compare_responses_balls} (centre-right) where the Gradient Response only identifies a subset of the negative points as being vulnerable to gaming and, as such, is a poor approximation of the best response.

\section{Proof of Theorem \ref{thm:indirect_derivative}}
\label{app:proof_of_indirect_derivatives}
\begin{proof}
Denote the concatenation of the primal and dual variables of the Lagrangian Dual response by $\vw = (\vz, \mu_1, \mu_2)$. Consider the residual of this problem,
\begin{align}
    F(\theta, \vx, \vw) &= \begin{bmatrix}
        \nabla_{\vz} \mathcal{L}(\theta, \vx, \vz, \mu_1, \mu_2) \\
        \mu_1 (h_\theta(\vz) - \epsilon) \\
        \mu_2 (c(\vx, \vz) - 2) \\
     \end{bmatrix}.
\end{align}
Let $\vw^\ast(\theta, \vx)$ be the solution to the KKT problem for a given $\theta$ and $\vx$. Then we can define the implicit function,
\begin{equation}
    F(\theta, \vx, \vw^\ast(\theta, \vx)) = 0.
\end{equation}
By the implicit function theorem,
\begin{align}
    0 &= \frac{d F}{d \theta} \\
    0 &= \frac{\partial F}{\partial \theta} + \frac{\partial F}{\partial \vw^\ast} \frac{\partial \vw^\ast}{\partial \theta},
\end{align}
where we have omitted the point where the derivatives are evaluated for readability. Rearranging, we obtain the expression for the indirect derivative,
\begin{equation}
    \frac{\partial \vw^\ast}{\partial \theta} = - \left( \frac{\partial F}{\partial \vw^\ast} \right)^{-1} \frac{\partial F}{\partial \theta}.
\end{equation}
The result follows from computing the derivatives for elements of $F$.
\end{proof}

\section{Experimental Detail}
\label{app:experiment_details}
In this section we provide greater detail about the experiments run in the main paper. All of the experiments were run on python, using standard pytorch packages to train all the models. The code to reproduce all the experimental results will be released after review. All results are from experiments run on a single GeForce RTX 2080 Ti. However, the models used in these experiments are of a size that they could have been run on most standard laptops. 

\subsection{Model Training}
Two classes of models were used for these experiments; Linear models and Multi-Layer Perceptron (MLP) models. The Linear models were standard linear models of the form $W\rvx + \mathbf{b}$, where $W$, the model weights and $\mathbf{b}$, the bias, are learnt during training. The MLP models all had 2 hidden layers, with each hidden layer having 128 dimensions and a final linear layer mapping to a scalar output. A ReLU activation is used between each model layer.

Models were trained on datasets with a 60:20:20 training:validation:test split. Training was run for a maximum of 50 epochs, and early stopping was performed by evaluating the strategic accuracy on the validation set and stopping if no improvement was observed for 10 consecutive epochs. Each epoch of training involved evaluating the strategic response to the current model parameter settings, computing the loss with respect to those strategic values, and then updating the model parameters with respect to that loss. The loss used for our linear results was the hinge loss, while the cross entropy loss was used for training the MLP models. The learning rate used for the training loop was specified per dataset. For more details on this parameter setting see Appendix \ref{app:hyperparameter_optimisation}.

\subsection{Response Computation}
\label{app:response_computation}
Three different response methods were evaluated as part of the experiments in this work; the closed form response to the linear model, the Gradient response ($\Delta^{\text{GD}}$) and the Lagrangian Dual response ($\Delta^{\text{LD}}$). The linear response is parameter free, and was computed according to the definition provided in Equation \ref{eqn:linear-exact}. 

$\Delta^{\text{GD}}$ was computed using standard Gradient ascent-based app on the objective specified in Equation \ref{eqn:gradient_best_response}. $\Delta^{\text{LD}}$ was learned using an Augmented Lagrangian-based approach to optimise the Lagrangian as proposed in Equation \ref{eqn:lagrangian_best_response}. Both of these responses were optimised with the Adam optimiser with a weight decay specified per dataset. Each response was computed by running the iterative optimiser for $1000$ iterations per batch of the training dataset with a learning rate specified per dataset. This number of iterations was deemed sufficient to ensure that both methods had reliably converged and produced plausible gaming behaviour on several datasets. For more details on the setting of the learning rate and the weight decay setting see Appendix \ref{app:hyperparameter_optimisation}.

\subsection{Hyperparameter Optimisation}
\label{app:hyperparameter_optimisation}
The model based optimisation had a single hyperparameter, the learning rate, that was set on a per-dataset basis. The response optimsation had two hyperparameters to be specified on a per-dataset basis; the learning rate and the weight decay. For each hyperparameter we specified a range of plausible values each could attain. Experiments were run for each parameter setting combination and each combination was scored according to the Learner's objective (strategic accuracy) and the Agents' objective (response objective). We then derived the Stackelberg equilibrium treating the Learner as the leader with actions (learning rate, weight decay) and the Agents as the follower with action (response learning rate). The resulting parameters were used in all experimental evaluations. 

\subsection{Implicit Gradient}
In practice, the indirect derivative can be approximated efficiently using a truncated Neumann series, as discussed in \citet{lorraine2020optimizing}. To compute this derivative, we require a solution to the Lagrangian Dual problem given in Equation \ref{eqn:lagrangian_best_response}. In order for training to remain tractable, we aggressively cap the number of iterations used to solve this lower level optimisation problem.

\section{Dataset Details}
\label{sec:dataset_details}
The results in this work made use of datasets from various sources. This section provides the details required to locate or generate the datasets used in this work. Note that, in the process of training our models for our experiments, the dataset values are standardised using the `StandardScaler' as part of the python `sklearn.preprocessing' module. All of the figures presented in this paper, and the results presented, are evaluated on these standardised values. 

\subsection{Toy Datasets}
Figure \ref{fig:compare_responses_balls} is produced from synthetic data generated for the example. The dataset is comprised of a set of positive and negative points sampled i.i.d from a Gaussian distribution with fixed variance. This dataset was sourced from \cite{levanon2022}. For our results we used a variance of $1.5$ for each ball, and and means $10 \pm 2.5$.

Figure \ref{fig:compare_non_linear_responses_twin_moons} is the standard twin-moons dataset as provided by the `sklearn' python package. 

Figure \ref{fig:compare_strategic_training} is realised on a novel dataset; positive points are sampled from a uniform ball with radius $1$. Negative points are them sampled uniformly from a positive half-disk of radius 1 with inner radius 2.  

\subsection{Experiment Datasets}
The datasets used to produce the experimental results in this paper are derived from three sources; the majority of the datasets were sampled from the TALENT dataset collection \cite{ye2024}, while Give Me Some Credit (GMSC, \cite{GiveMeSomeCredit}) and the Houses dataset (\cite{houses2014}) are popular datasets that have previously been used to evaluate results in Strategic or Performative settings.

The GMSC dataset is comprised of historical data relating to borrowers seeking to get a loan from a bank. The motivating objective of the dataset is to predict the probability that a given borrower will default on their loan. \cite{perdomo2020, mofakhami2023} have previously used this dataset to derive their results. However, we note, whereas \cite{perdomo2020} only considered the case where a subset of the dataset features were vulnerable to gaming, the experiments in this paper consider all features equally vulnerable to gaming behaviour.

The Houses dataset is a binarized form of the Housing dataset use in \cite{cyffers2024}. Specific details pertaining to this dataset can be found at \cite{houses2014}. 

The TALENT dataset is a collection of datasets used in the evaluation of Deep Learning models on Tabular data \cite{ye2024}. From the set of classification tasks within this collection, we selected a subset where one could plausibly observe gaming behaviour. The datasets used are:
\begin{itemize}
    \item Bank Customer Churn Dataset
    \item Default of Credit Card Clients
    \item Employee
    \item HR Analytics Job Change of Data Scientists
    \item mobile c36 oversampling
    \item Statlog
    \item Water Quality and Potability
\end{itemize}

\subsection{Choice of Dataset Radius/Cost Function}
In order to elicit gaming behaviour, the $\epsilon$ parameter in the cost (Equation \ref{eqn:cost}) is learned for each dataset. This parameter value is chosen as the value that allowed approximately $50\%$ of the points in the dataset to game a linear classifier trained on the clean data.

\section{Proof of Proposition \ref{prop:post_check_means_reputable}}
\label{sec:proof_of_prop_post_check_means_reputable}
\begin{proof}
    From Section \ref{sec:post_response_checks}, if $f_{\theta}(\Delta(\vx, \theta)) \neq f_{\theta}(\vx)$ after the post-response checks, then $f_{\theta}(\Delta(\vx, \theta))>0$ and $c(\vx, \Delta(\vx, \theta))<2$. Therefore, $\vx$ is a gameable point in response to $\theta$, so $\Delta^{*}$ also games it; $f_{\theta}(\Delta^{*}(\vx, \theta)) \neq f_{\theta}(\vx)$. Hence $f_{\theta}(\Delta(\vx, \theta)) \neq f_{\theta}(\vx) \implies f_{\theta}(\Delta^{*}(\vx, \theta)) \neq f_{\theta}(\vx)$ and the checked $\Delta$ is a reputable response.
\end{proof}

\subsection{Wall Clock Time for REGD vs. REGD-LD vs. TGD}
\begin{figure}[t]
    \centering
    \includegraphics[width=\linewidth]{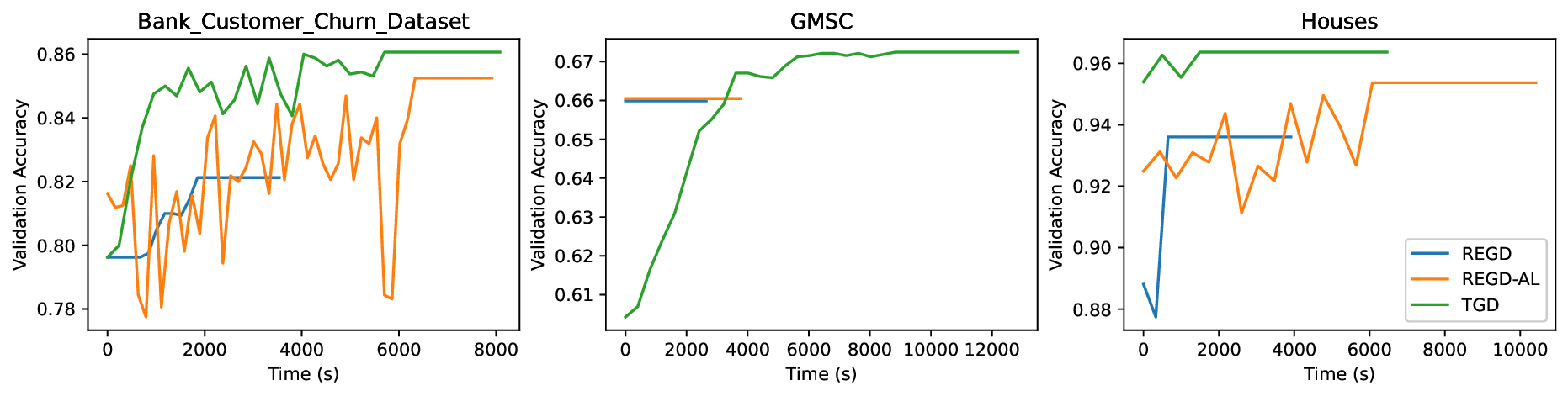}
    \caption{Validation accuracy v.s. runtime (s) for MLP models on subset of TALENT datasets. Models evaluated are; }
    \label{fig:wall_clock_times}
\end{figure}

Figure \ref{fig:wall_clock_times} presents the wall clock times associated with training MLP models on a subset of the TALENT datasets. Times are computed based on experiments run on a single GeForce RTX 2080 Ti. REGD corresponds to the MLP model trained with the strategic REGD with the Gradient response. REGD-AL corresponds to an MLP trained with the strategic REGD using the Lagrangian Dual Response. TGD corresponds to training using Gradient Descent optimising the total gradient instead of the direct gradient.



\end{document}